\documentclass[letterpaper, 10 pt, conference]{ieeeconf} 

\IEEEoverridecommandlockouts                   
\usepackage{cite}
\usepackage{xcolor}
\usepackage[hidelinks]{hyperref}
\usepackage{amsmath,amssymb,amsfonts}
\usepackage{mathtools}
\usepackage{algorithm}
\usepackage{algorithmic}
\usepackage{subcaption}
\usepackage{graphicx}
\usepackage{textcomp}
\usepackage{booktabs}
\usepackage{enumerate}

\DeclareSymbolFont{bbold}{U}{bbold}{m}{n}
\DeclareSymbolFontAlphabet{\mathbbold}{bbold}

\newtheorem{definition}{Definition}[section]

\newtheorem{theorem}{Theorem}

\newtheorem{remark}{Remark}[section]
\newtheorem{problem}{Problem}
\usepackage{comment}
\usepackage{footnote}
\makesavenoteenv{algorithm}

\newcommand{\real}{\mathbb{R}}

\title{\LARGE \bf
Data-Driven Compositional Safety Verification of Interconnected Monotone Systems 
}

\author{Amirreza Alavi, \IEEEmembership{Graduate Student Member, IEEE}, Majid Zamani, \IEEEmembership{Senior Member, IEEE},\\ and Saber Jafarpour, \IEEEmembership{Member, IEEE}
\thanks{This manuscript is an extended version of a paper accepted for
presentation at the 65th IEEE Conference on Decision and Control
(CDC 2026) and for publication in its proceedings.}
\thanks{This work was supported by the NSF CAREER grant CNS-2145184.}
\thanks{A. Alavi, M. Zamani, and S. Jafarpour are with the University of Colorado Boulder, Boulder, CO 80309, (emails:{\tt\small \{seyedamirreza.alavi, majid.zamani, saber.Jafarpour\}@colorado.edu})}
}

\begin{document}

\maketitle
\begingroup
\renewcommand{\thefootnote}{}
\footnotetext{\textcopyright{} 2026 IEEE. Personal use of this material
is permitted. Permission from IEEE must be obtained for all other uses,
in any current or future media, including reprinting/republishing this
material for advertising or promotional purposes, creating new
collective works, for resale or redistribution to servers or lists, or
reuse of any copyrighted component of this work in other works.}
\endgroup
\thispagestyle{empty}
\pagestyle{empty}

\begin{abstract}
This paper introduces a sample-efficient compositional method for formal safety verification of interconnected monotone systems without requiring explicit models of the local subsystems.
Existing data-driven approaches either lack formal safety guarantees or rely on dense Lipschitz-based discretizations of the state space to provide such guarantees, which leads to significant computational overhead and limits scalability. 
In contrast, we leverage the monotonicity of local subsystems to construct tractable local interval-barrier certificates using only boundary evaluations of each subsystem in a decentralized manner, together with a global condition that guarantees the safety of the overall interconnected system. 
At the local level, our framework learns interval-barrier certificates using monotone neural networks from boundary samples induced by partitions of the local state and internal-input spaces.
At the global level, it composes these local neural interval-barrier certificates using the interconnection structure to certify the safety of the overall system. 
Furthermore, under appropriate structural assumptions, we reformulate the global safety condition into a scalable form that can be directly incorporated into the neural network loss. This enables the enforcement of overall system safety during local training.
The experimental results demonstrate the effectiveness and scalability of the proposed method.

\end{abstract}


\section{Introduction}
Complex networks of large-scale interconnected systems arise naturally in a wide range of applications, including traffic networks~\cite{FB-CTDS} and biological networks~\cite{EDS:07}. As these systems often operate in safety-critical environments, formal safety verification is essential. However, ensuring safety remains challenging due to their complexity and tight coupling. A common approach is to employ compositional methods that avoid treating the network as a monolithic system and instead derive network-level guarantees from the structure of subsystems and their interconnections.

In many real-world applications, the precise dynamics of individual components of the system are often unavailable due to physical complexity or environmental uncertainty, while the interconnection topology of the overall network is known. At the same time, rapid advances in big data management and low-cost distributed computing have enabled the availability of rich data samples for subsystems. These developments have led to a growing interest in data-driven approaches for system safety, e.g.,~\cite{sadraddini2018formal,10015033}.

\paragraph*{Related Work}

Classical approaches to safety verification often rely on barrier certificates to provide formal guarantees. A barrier certificate is a function whose level sets separate safe and unsafe regions, thereby certifying safety~\cite{prajna2004safety,ADA-XX-JWG-PT:17}. Existing methods construct such certificates using techniques such as sum-of-squares optimization~\cite{prajna2004safety} and SMT-based approaches~\cite{edwards2024fossil}. While effective, these approaches typically require precise system models and become computationally expensive in high-dimensional settings. More recently, neural networks have emerged as a promising alternative for learning barrier functions directly from data~\cite{zhao2020synthesizing,zhang2024exact}. However, these methods typically suffer from high sample complexity and limited scalability, particularly in large-scale interconnected systems where each subsystem may itself be high-dimensional.

To address these challenges, compositional approaches to safety verification have been developed in the literature. Existing methods can be broadly categorized into abstraction-based approaches, which construct finite-state approximations~\cite{meyer2017compositional}, and abstraction-free approaches, such as barrier certificates~\cite{sloth2012compositional}. However, when extended to the data-driven setting, compositional safety frameworks exhibit notable limitations: they either lack formal guarantees~\cite{noroozi2021data} or rely on Lipschitz bounds~\cite{samariTAC2026}, whose estimation is computationally demanding. Moreover, these approaches may still incur significant computational and data requirements due to the curse of dimensionality, especially when the subsystems themselves are high-dimensional.

Monotone systems exhibit highly ordered dynamical behaviors~\cite{DA-EDS:03} and arise naturally in a wide range of applications, including traffic networks~\cite{SC-MA:15}, population dynamics~\cite{leenheer2004predator}, and biological systems~\cite{EDS:07}. These structural properties have recently been leveraged to enable safety verification with improved sample efficiency (e.g.,~\cite{alaviCDC2025,FelipeCDC2025}).
However, such monotonicity-based approaches are not applicable when the overall interconnected system is non-monotone~\cite{enciso2006nonmonotone,de2005predator}. Moreover, even when monotonicity holds, their scalability remains limited in large-scale systems.

\paragraph*{Contributions}

In this work, we propose a novel data-driven and compositional framework for formal safety verification of networks of interconnected monotone subsystems, where the overall interconnection may not preserve monotonicity.
Our approach consists of (i) a \emph{local} search for certificates from sampled data for each subsystem, and (ii) a \emph{global} composition of these local certificates, leveraging the system topology to establish the safety of the overall system.

At the local level, we learn monotone neural interval-barrier functions from a finite set of samples. By exploiting the monotone structure, we reduce verification over continuous state spaces to localized boundary conditions, leading to a computationally efficient and sample-efficient framework. In contrast to existing methods, our approach does not rely on Lipschitz continuity assumptions or the estimation of Lipschitz constants, thereby avoiding additional assumptions and computational overhead.
At the global level, leveraging the interconnection structure of the system, we establish a Linear Matrix Inequality (LMI) that guarantees that the composition of the local barrier functions ensures the safety of the overall system. 
We design a tailored loss function for monotone neural architectures that enforces the interval-barrier conditions, ensuring correctness when the loss is zero. Moreover, under suitable conditions, the global compositional constraint can be embedded into local learning in a neural-friendly manner, eliminating the need to explicitly construct or verify the global LMI condition. 
Finally, we demonstrate the effectiveness and scalability of the proposed framework on benchmark examples. The simulation results highlight its scalability and sample efficiency.

\section{Preliminaries and Problem Statement}

\subsection{Notation}
\label{subsec:notation}
The identity matrix of dimension $n$ is denoted by $I_n$. The space of all $n\times n$ symmetric matrices is denoted by $\mathbb{S}^n$. Given matrices $A$ and $B$, $A \otimes B$ denotes their Kronecker product. For symmetric matrices, $A \preceq B$ indicates that $A-B$ is negative semidefinite. Given a matrix $A\in \real^{m\times n}$, the matrices $A^{+},A^{-}\in \real^{m\times n}$ contain the positive 
and negative entries of $A$, respectively, with zeros elsewhere, so that 
$A = A^{+} + A^{-}$. 
For vectors $x,y \in \real^n$, we denote $x\le y$ if $x_i\le y_i$, for every $i \in \{1,\ldots,n\}$. We represent a hyperrectangle by $[\underline{x},\overline{x}]= \{x\in \real^n\mid \underline{x}\le x \le \overline{x}\}$. We define $\mathrm{ReLU}$ by $\mathrm{ReLU}(x) {=}\max(0, x)$. Throughout the paper, $0_{n \times n}$ denotes the $n \times n$ zero matrix. For a matrix $A$, $A^\top$ denotes its transpose. For a differentiable function $g$, $Dg(x)$ denotes its Jacobian. A map $f:\mathcal{X}\to \mathcal{X}$ is said to be sign-stable on $\mathcal{X}$ if, for every $x \in \mathcal{X}$, each partial derivative $\frac{\partial f_i}{\partial x_j}(x)$ maintains a consistent sign; that is, either $\frac{\partial f_i}{\partial x_j}(x) > 0$ for all $x \in \mathcal{X}$, or $\frac{\partial f_i}{\partial x_j}(x) < 0$ for all $x \in \mathcal{X}$ or $\frac{\partial f_i}{\partial x_j}(x) = 0$ for all $x \in \mathcal{X}$.  For sets $\mathcal{X}_1,\dots,\mathcal{X}_m$, their Cartesian product is denoted by $\mathcal{X}_1 \times \cdots \times \mathcal{X}_m = \prod_{i=1}^m \mathcal{X}_i$.

\subsection{Interconnected Systems}

The class of discrete-time interconnected dynamical systems is defined as follows.

\begin{definition}[Interconnected systems]\label{def:dt_system}
An interconnected system $\Sigma$ is defined as the interconnection of $N$ discrete-time subsystems $\Sigma_i = (\mathcal{X}_i,\mathcal{X}_{0,i},\mathcal{W}_i,g_i)$, where $\mathcal{X}_i \subseteq \mathbb{R}^{n_i}$ is the state set, $\mathcal{X}_{0,i}\subseteq\mathcal{X}_i$ is the initial set, $\mathcal{W}_i \subseteq \mathbb{R}^{m_i}$ is the internal input set, and $g_i : \mathcal{X}_i \times \mathcal{W}_i \to \mathcal{X}_i$ is the state-transition map, for $i \in \{1,\dots,N\}$. The dynamics of the subsystem $\Sigma_i$ are described by
\begin{equation}\label{eq:sigmai}
\quad x_i^{+} = g_i(x_i, w_i).
\end{equation}
The subsystems are interconnected through 
$w = Mx$,
where $x = (x_1^{\top},\dots,x_N^{\top})^{\top}$, $w = (w_1^{\top},\dots,w_N^{\top})^{\top}$, and $M \in \mathbb{R}^{m \times n}$ is the interconnection matrix, with $n = \sum_{i=1}^{N} n_i$ and $m = \sum_{i=1}^{N} m_i$. The interconnected system is represented by $\Sigma = (\mathcal{X},\mathcal{X}_0,g,M)$, where $\mathcal{X} = \prod_{i=1}^{N} \mathcal{X}_i$ is the overall state set, $\mathcal{X}_0 = \prod_{i=1}^{N} \mathcal{X}_{0,i}$ is the overall initial set, and the map $g:\mathcal{X}\to \mathcal{X}$ given by $g(x) = (g_1(x_1,(Mx)_1)^{\top},\ldots, g_{N}(x_N,(Mx)_N)^{\top})^{\top}$ describes the overall evolution of the interconnected system.
\smallskip
\end{definition}

While the framework of interconnected systems is highly general, it is often too broad to enable scalable analysis without additional structure. In this paper, we exploit monotonicity as a key structural property at the subsystem level. Monotone systems exhibit order-preserving dynamics, which lead to significant simplifications in analysis.

\begin{definition}[Monotone Systems]
A discrete-time dynamical system  
$\Sigma = (\mathcal{X}, \mathcal{X}_{0}, \mathcal{W}, f)$ 
with state-transition map $f : \mathcal{X} \times \mathcal{W} \to \mathcal{X}$ 
is \emph{monotone} if
\[
x \le y \ \text{and}\ w \le v 
\;\;\implies\;\;
f(x,w) \le f(y,v).
\]
\end{definition}
\smallskip

Assume that $\mathcal{X} \subseteq \mathbb{R}^n$ and $\mathcal{W} \subseteq \mathbb{R}^m$. 
If the transition map $f$ is differentiable and the sets $\mathcal{X}$ and $\mathcal{W}$ are convex, 
then the system $\Sigma$ is monotone if and only if
$Df(x,w) \ge 0,\forall (x,w) \in \mathcal{X} \times \mathcal{W}$,
where $Df(x,w) \in \mathbb{R}^{n \times (n+m)}$ denotes the Jacobian of $f$ with respect to $(x,w)$, and the inequality is interpreted elementwise~\cite{DA-EDS:03}. In this paper, we focus on interconnected systems in which subsystems are monotone. 
\begin{definition}[Interconnected Monotone Systems]\label{def:IMS}
  An interconnected system $\Sigma = (\mathcal{X},\mathcal{X}_0,g,M)$ is said to be an interconnected monotone system (IMS) if each subsystem $\Sigma_i = (\mathcal{X}_i,\mathcal{X}_{0,i},\mathcal{W}_i,g_i)$ is monotone, for $i \in \{1,\ldots,N\}$.
\end{definition}
\smallskip

In Definition~\ref{def:IMS}, monotonicity is imposed at the subsystem level, without requiring the overall interconnected system to be monotone. Many natural and engineered systems admit such a structure, including traffic networks~\cite{SC-MA:15} and biological systems~\cite{EDS:07}. Moreover, a broad class of non-monotone systems can be represented as interconnections of monotone subsystems through a suitable choice of the interconnection matrix. A notable class consists of systems with sign-stable state-transition maps, which encompass many real-world non-monotone dynamics, such as population models~\cite{de2005predator} and gene regulatory networks~\cite{enciso2006nonmonotone}.
\begin{theorem}[Sign-stable Systems]
Consider a discrete-time dynamical system $\Sigma = (\real^n,\prod_{i=1}^{n}\mathcal{X}_{0,i},f)$ with $\mathcal{X}_{0,i} \subseteq \real$, where the state-transition map $f:\real^n \to \real^n$ is sign-stable on $\real^n$. Then $\Sigma$ can be represented as an interconnected monotone system. 
\end{theorem}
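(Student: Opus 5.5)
We decompose the system into $N=n$ scalar subsystems, one per state coordinate. Each coordinate map $f_i$ is then rewritten as a function that is increasing in every argument, by routing the negatively-signed dependencies through the interconnection matrix with a sign flip. The key observation is that sign-stability fixes, for each pair $(i,j)$, the sign of $\partial f_i/\partial x_j$ globally on $\real^n$. So for each $i$ the index set $\{1,\dots,n\}$ splits into $P_i=\{j: \partial f_i/\partial x_j>0\}$, $Q_i=\{j:\partial f_i/\partial x_j<0\}$ and $Z_i=\{j:\partial f_i/\partial x_j\equiv 0\}$. Since $\real^n$ is convex, $f_i$ does not depend on $x_j$ for $j\in Z_i$.

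\textbf{Construction.} For $i\in\{1,\dots,n\}$, take $\Sigma_i=(\real,\mathcal{X}_{0,i},\real^{n-1},g_i)$ with internal input $w_i\in\real^{n-1}$ indexed by $j\neq i$. Define
\[
g_i(x_i,w_i)=f_i(\xi),\qquad \xi_i=x_i,\quad \xi_j=\begin{cases} (w_i)_j, & j\in P_i\cup Z_i,\ j\neq i,\\ -(w_i)_j, & j\in Q_i,\ j\ne i.\end{cases}
\]
The interconnection matrix $M\in\real^{n(n-1)\times n}$ sets the row of $M$ producing $(w_i)_j$ equal to $e_j^\top$ if $j\in P_i\cup Z_i$ and $-e_j^\top$ if $j\in Q_i$. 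With this choice, under $w=Mx$ we recover $\xi=x$, so $g_i(x_i,(Mx)_i)=f_i(x)$, and the interconnected map coincides with $f$. The initial set is $\prod_i\mathcal{X}_{0,i}$, which matches Definition~\ref{def:dt_system}.

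\textbf{Local monotonicity.} Each $\Sigma_i$ must be shown monotone. By the chain rule, $\partial g_i/\partial (w_i)_j$ equals $\partial f_i/\partial x_j\ge0$ for $j\in P_i\cup Z_i$, and $-\partial f_i/\partial x_j>0$ for $j\in Q_i$.

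\textbf{The diagonal term (main obstacle).} The remaining derivative is $\partial g_i/\partial x_i=\partial f_i/\partial x_i$, which may be negative. In that case we cannot flip the sign of $x_i$ inside its own subsystem without changing the state. The fix is to treat self-dependence as an interconnection as well: if $i\in Q_i$, make $g_i$ independent of $x_i$ and feed $-x_i$ in through an extra internal-input component $(w_i)_i$, with the corresponding row of $M$ equal to $-e_i^\top$. For uniformity, let every $w_i\in\real^n$ include all indices $j$ and let $g_i$ ignore $x_i$ entirely, so $M$ is the signed $n^2\times n$ selection matrix. Then $\partial g_i/\partial x_i=0$ and all partial derivatives with respect to $w_i$ are nonnegative. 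Since the domain $\real\times\real^n$ is convex and $g_i$ is differentiable (implicit in sign-stability), the Jacobian criterion of~\cite{DA-EDS:03} gives monotonicity of each $\Sigma_i$. Hence $\Sigma$ is an IMS in the sense of Definition~\ref{def:IMS}.
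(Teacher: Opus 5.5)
Your proposal is correct. After your fix for the diagonal term, it is the same construction as the paper's: taking $w_i\in\mathbb{R}^n$, letting $g_i$ ignore $x_i$, and using the signed $n^2\times n$ selection matrix is exactly the paper's choice $g_i(x_i,w_i)=f_i(S_iw_i)$ with $S_i=\operatorname{diag}(s_{i1},\dots,s_{in})$ and $M=[S_1^\top,\dots,S_n^\top]^\top$, and the recovery of $f$ rests on $S_i^2=I_n$. The intermediate construction with $w_i\in\mathbb{R}^{n-1}$ is unnecessary and can be dropped.
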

\begin{proof}
Since $f$ is sign-stable, for every $i,j\in\{1,\ldots,n\}$, one can define 
\[
s_{ij}:=
\begin{cases}
1, & \displaystyle\frac{\partial f_i}{\partial x_j}(x)\geq0,
     \quad \forall x\in\mathbb{R}^n,\\[1mm]
-1, & \displaystyle\frac{\partial f_i}{\partial x_j}(x)<0,
     \quad \forall x\in\mathbb{R}^n,
\end{cases}
\]
For every $i\in \{1,\ldots,n\}$, we set $S_i:=\operatorname{diag}(s_{i1},\ldots,s_{in})$ and consider $N=n$ scalar-state subsystems $\Sigma_i = (\mathcal{X}_i,\mathcal{X}_{0,i},\mathcal{W}_i,g_i)$ with
$\mathcal{X}_i=\mathbb{R}$, $\mathcal{W}_i=\mathbb{R}^n$, and
\[
g_i(x_i,w_i):=f_i(S_iw_i).
\]
The map $g_i$ is independent of $x_i$, and, for every
$j\in\{1,\ldots,n\}$,
\[
\frac{\partial g_i}{\partial x_i}=0,
\qquad
\frac{\partial g_i}{\partial(w_i)_j}
=
s_{ij}
\frac{\partial f_i}{\partial x_j}(S_iw_i)
\geq0.
\]
Therefore, every subsystem $\Sigma_i$ is monotone. Define
\[
M:=
\begin{bmatrix}
S_1\\ \vdots\\ S_n
\end{bmatrix}
\in\mathbb{R}^{n^2\times n}.
\]
Then $(Mx)_i=S_ix$. Since $S_i^2=I_n$, we have
\[
g_i(x_i,(Mx)_i)
=
f_i(S_iS_ix)
=
f_i(x),
\qquad i\in\{1,\ldots,n\}.
\]
Thus, defining
\begin{align*}
g(x) = (g_1(x_1,(Mx)_1)^{\top},\ldots, g_n(x_n,(Mx)_n)^{\top})^{\top},
\end{align*}
the tuple $(\real^n,\prod_{i=1}^{n}\mathcal{X}_{0,i},g,M)$ represents $\Sigma$ as an interconnected system with monotone subsystems $\Sigma_i$.
\end{proof}

\subsection{Barrier Certificate for Safety Verification}

Ensuring safety is a fundamental objective in the analysis and control of safety-critical systems. In this section, we present a set-theoretic framework for studying safety of dynamical systems.

\begin{definition}[Safety of Systems]
\label{def:safety}
Consider a discrete-time dynamical system (dtDS)
$\Sigma = (\mathcal{X}, \mathcal{X}_{0}, \mathcal{W}, f)$ 
and a set of unsafe states $\mathcal{X}_{u} \subseteq \mathcal{X}$. 
The system $\Sigma$ is \emph{safe} with respect to 
$\mathcal{X}_{u}$ if, for every initial state $x_0 \in \mathcal{X}_{0}$ 
and every input sequence $\{w_t\}_{t=0}^{\infty}\in \mathcal{W}
$, 
the corresponding trajectory $\{x_t\}_{t=0}^{\infty}$ satisfies
\[
x_t \notin \mathcal{X}_{u}, \quad \forall t \ge 0.
\]
\end{definition}
\smallskip

One of the standard approaches for ensuring safety of dynamical systems is the use of barrier certificates. A barrier certificate provides a functional characterization of the separation between safe and unsafe regions of the state space by enforcing an inductive condition that guarantees system trajectories remain within the safe set for all time~\cite{prajna2004safety}.

\begin{definition}[Barrier certificate]
\label{def:barrier}
Consider a discrete-time dynamical system
$\Sigma = (\mathcal{X}, \mathcal{X}_{0}, \mathcal{W}, f)$ 
and a set of unsafe states $\mathcal{X}_{u} \subseteq \mathcal{X}$. 
A function $\mathcal{B}:\mathcal{X} \to \mathbb{R}$ is a
\emph{barrier certificate} for $\Sigma$ with respect to 
$\mathcal{X}_{u}$ if there exist scalars $\gamma, \eta, \delta \in \mathbb{R}$ 
with $\gamma \le \delta \le  \eta$ such that
\begin{align}
&\mathcal{B}(x) \le \gamma, && \forall x \in \mathcal{X}_{0}, \label{eq:bc_initial}\\
&\mathcal{B}(x) > \eta, && \forall x \in \mathcal{X}_{u}, \label{eq:bc_unsafe}\\
&\mathcal{B}(x) \le \delta \implies \mathcal{B}(f(x,w)) \le \delta,&&\forall x \in \mathcal{X},\ \forall w \in \mathcal{W}.\label{eq:bc_invariance}
\end{align}
\end{definition}
\smallskip

The existence of a barrier certificate is a necessary and sufficient condition for safety of the system~\cite{murali2023co}.

\subsection{Problem Statement}

In many real-world applications, the components of large-scale interconnected systems are not fully known and are accessible only through simulations or sampled data. This lack of explicit models poses challenges for formal safety verification. Motivated by this setting, we aim to verify the safety of interconnected monotone systems with unknown components using only sampled data from each subsystem, without relying on a centralized model of the overall system.

\begin{problem}[Safety verification of IMS]\label{p1}
Consider an IMS $\Sigma=(\mathcal{X},\mathcal{X}_0,g,M)$, consisting of $N$ subsystems $\Sigma_i = (\mathcal{X}_i,\mathcal{X}_{0,i},\mathcal{W}_i,g_i)$ with an unsafe set 
$\mathcal{X}_u = \prod_{i=1}^{N} \mathcal{X}_{u,i} \subseteq \mathcal{X}$. We assume that the interconnection matrix $M$ is known. For each subsystem $i \in \{1,\dots,N\}$, the local state-transition map $g_i$ is unknown. The goal is to determine, using finitely many queries of the state-transition maps $g_i$, whether the system $\Sigma$ is safe with respect to the unsafe set $\mathcal{X}_u$.
\end{problem}

In Problem~\ref{p1}, we assume that the interconnection matrix $M$ is known, while the subsystems are unknown but monotone. The first assumption is typically justified by knowledge of the system’s interconnection topology, whereas the second assumption is often supported by prior knowledge of the subsystems’ physical or structural properties. Our goal is to verify the safety of the overall system in a decentralized manner by constructing local barrier certificates for each subsystem. This setting is particularly relevant in real-world applications, where subsystems may be geographically distributed or physically separated, and centralized data collection may be impractical. Moreover, this approach reduces computational complexity and memory requirements by decomposing a global safety verification problem into smaller local subproblems.

Several methods have been proposed for data-driven safety verification of interconnected systems~\cite{noroozi2021data,samariTAC2026}. However, these approaches either lack formal guarantees or, when such guarantees are provided, they are often data-intensive in enforcing local conditions and require Lipschitz bounds on both the system and the barrier function to construct a dense discretization of the state space. Motivated by the limitations of existing approaches, we propose a data-driven, sample-efficient, and scalable framework for the safety verification of interconnected monotone systems.

\section{Safety of Interconnected Monotone Systems}

 Leveraging the interconnected structure of the system, our approach decomposes safety verification into a \textit{local} function search and a \textit{global} certificate construction. At the local level, we use interval analysis to construct suitable data-driven certificates for subsystems. At the global level, we leverage knowledge of the system topology to compose these local certificates and certify the safety of the overall system.

\subsection{Local Interval-Barrier Certificates}

We begin by focusing on local certificates, introducing interval-barrier functions as subsystem-level constructs. Leveraging monotonicity, these certificates can be efficiently constructed from sampled data via interval analysis. Before formally defining interval-barrier certificates, we first describe the state and input set partitioning used in our framework. Suppose that $\Sigma=(\mathcal{X},\mathcal{X}_0,g,M)$ is an interconnected monotone system. For each subsystem $\Sigma_i=(\mathcal{X}_i,\mathcal{X}_{0,i},\mathcal{W}_i,g_i)$, we partition the state space $\mathcal{X}_i$ and the internal input space $\mathcal{W}_i$ into finitely many hyper-rectangular cells. These partitions are represented by the collections 
$\{[\underline{x}_i^j,\overline{x}_i^j]\}_{j \in \mathcal{P}_i}$ and 
$\{[\underline{w}_i^k,\overline{w}_i^k]\}_{k \in \mathcal{Q}_i}$, respectively, 
where $\mathcal{P}_i$ and $\mathcal{Q}_i$ denote the corresponding finite 
index sets. 
For each subsystem, the state and internal-input cells cover their
respective domains:
$\mathcal{X}_i=\bigcup_{j\in\mathcal{P}_i}
[\underline{x}_i^j,\overline{x}_i^j],
\mathcal{W}_i=\bigcup_{k\in\mathcal{Q}_i}
[\underline{w}_i^k,\overline{w}_i^k]$.
We then choose index sets
$\mathcal{I}_i\subseteq\mathcal{P}_i$ and
$\mathcal{U}_i\subseteq\mathcal{P}_i$ such that the corresponding
cells cover the local initial and unsafe sets, respectively:
\begin{align}
\mathcal{X}_{0,i}
&\subseteq
\bigcup_{j\in\mathcal{I}_i}
[\underline{x}_i^j,\overline{x}_i^j],
&
\mathcal{X}_{u,i}
&\subseteq
\bigcup_{j\in\mathcal{U}_i}
[\underline{x}_i^j,\overline{x}_i^j].
\label{eq:cover}
\end{align}

For a given subsystem partition satisfying~\eqref{eq:cover}, we introduce interval-barrier functions for each subsystem as follows.

\begin{definition}[Local Interval-barrier functions]
\label{def:MQB}
Consider an IMS $\Sigma=(\mathcal{X},\mathcal{X}_0,g,M)$ consisting of $N$ subsystems $\Sigma_i=(\mathcal{X}_i,\mathcal{X}_{0,i},\mathcal{W}_i,g_i)$ with the unsafe set $\mathcal{X}_u=\prod_{i=1}^{N}\mathcal{X}_{u,i}$ and the state set and input set partitions satisfying~\eqref{eq:cover}.
Fix a scalar $\lambda\in\mathbb{R}_{\geq0}$, common to all subsystems. A monotone function $\mathcal{B}_i : \mathcal{X}_i \to \mathbb{R}$ is called 
an \emph{interval-barrier function} for the subsystem $\Sigma_i$ if 
there exist scalars $\gamma_i,\eta_i\in\mathbb{R}$ and a matrix 
$X_i =\begin{bmatrix}
    X^{11}_i & X^{12}_i\\ X^{21}_i & X^{22}_i
\end{bmatrix} \in\mathbb{S}^{m_i+n_i}$ where $X_i^{21}=(X_i^{12})^\top$ such that
\begin{align}
\hspace{-0.3cm}\mathcal{B}_i(\overline{x}_i^j) &\le \gamma_i, 
&& \forall j \in \mathcal{I}_i, \label{eq:ib_initial}\\
\hspace{-0.3cm}\mathcal{B}_i(\underline{x}_i^j) &\ge \eta_i, 
&& \forall j \in \mathcal{U}_i, \label{eq:ib_unsafe}\\
\hspace{-0.3cm}\mathcal{B}_i\big(g_i(\overline{x}_i^j,\overline{w}_i^k)\big) 
&\le \lambda \mathcal{B}_i(\underline{x}_i^j) + \mathcal{Z}_i^{j,k},
&& \forall j \in \mathcal{P}_i,\forall k \in \mathcal{Q}_i. 
\label{eq:ib_dynamics}
\end{align}
Here, $\mathcal{Z}_i^{j,k}$ is defined as
\begin{align*}
\mathcal{Z}_i^{j,k} 
& = [\underline{w}^k_i,\overline{w}^k_i]^{\top}X^{11}_i[\underline{w}^k_i,\overline{w}^k_i] + [\underline{w}^k_i,\overline{w}^k_i]^{\top}X^{12}_i[\underline{x}^j_i,\overline{x}^j_i] \\ & + [\underline{x}^j_i,\overline{x}^j_i]^{\top}X^{21}_i[\underline{w}^k_i,\overline{w}^k_i] + [\underline{x}^j_i,\overline{x}^j_i]^{\top}X^{22}_i[\underline{x}^j_i,\overline{x}^j_i],
\end{align*}
where, for hyper-rectangles
$[\underline{a},\overline{a}]\subseteq\mathbb{R}^p$ and
$[\underline{b},\overline{b}]\subseteq\mathbb{R}^q$, and a matrix
$X\in\mathbb{R}^{p\times q}$, we define
\begin{align}\label{eq:bigformula}
[\underline{a},\overline{a}]^\top
X[\underline{b},\overline{b}]
:=
\sum_{r=1}^{p}\sum_{s=1}^{q}
\min\big\{&
X_{rs}\underline{a}_r\underline{b}_s,\,
X_{rs}\underline{a}_r\overline{b}_s,\\
&
X_{rs}\overline{a}_r\underline{b}_s,\,
X_{rs}\overline{a}_r\overline{b}_s
\big\}.
\end{align}
\end{definition}
\smallskip

The definition of local interval-barrier functions is inherently data-driven, enabling their computation without explicit knowledge of the state-transition map $g_i$, relying instead only on sampled data at the boundary points of the partition of $\mathcal{X}_i$ and $\mathcal{W}_i$. In Section~\ref{sec:nn}, we employ neural networks to learn local interval-barrier functions.

\subsection{Safety Verification via Interval-Barrier functions}

In this section, we use the interconnection structure of the system to combine the local interval-barrier functions constructed for each subsystem and synthesize a barrier certificate for the overall system. Our approach is inherently data-driven and compositional, rather than relying on the construction of a centralized barrier certificate.

\begin{theorem}[Compositional barrier certificates]
\label{th:CMB}
Consider the IMS $\Sigma=(\mathcal{X},\mathcal{X}_0,g,M)$ 
consisting of $N$ subsystems $\Sigma_i=(\mathcal{X}_i,\mathcal{X}_{0,i},\mathcal{W}_i,g_i)$ with state set and input set partitions satisfying~\eqref{eq:cover}. Suppose that, for each subsystem 
$\Sigma_i$, there exist an interval-barrier function 
$\mathcal{B}_i:\mathcal{X}_i\to \real$ and 
a matrix $X_i\in\mathbb{S}^{m_i+n_i}$ satisfying~\eqref{eq:ib_initial}--\eqref{eq:ib_dynamics}.
Define the structured symmetric matrix
\[
\Xi :=
\begin{bmatrix}
\operatorname{diag}(X_1^{11},\dots,X_N^{11}) & \operatorname{diag}(X_1^{12},\dots,X_N^{12})\\
\operatorname{diag}(X_1^{21},\dots,X_N^{21}) & \operatorname{diag}(X_1^{22},\dots,X_N^{22})
\end{bmatrix}.
\]
If the following global condition holds:
\begin{equation}
\label{eq:big_lmii}
\Delta \;:=\;
\begin{bmatrix} M \\ I_n \end{bmatrix}^{\!\top}
\Xi
\begin{bmatrix} M \\ I_n \end{bmatrix}
\;\preceq\; 0,
\;\;\;
\sum_{i=1}^{N}\gamma_i \le 0,
\;
\sum_{i=1}^{N}\eta_i > 0,
\end{equation}
then the function
$\mathcal{B}(x) = \sum_{i=1}^{N} \mathcal{B}_i(x_i)$
is a barrier certificate for the overall system $\Sigma$, 
and hence guarantees its safety.
\end{theorem}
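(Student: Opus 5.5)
We take $\gamma:=\sum_{i=1}^N\gamma_i$ and $\delta:=\eta:=0$ in Definition~\ref{def:barrier}, and check \eqref{eq:bc_initial}--\eqref{eq:bc_invariance} for $\mathcal{B}(x)=\sum_i\mathcal{B}_i(x_i)$ with $f=g$. By \eqref{eq:big_lmii} we have $\gamma\le 0=\delta=\eta$, so the ordering requirement holds. Two facts are used repeatedly: each $\mathcal{B}_i$ is monotone, and each $g_i$ is monotone. Consequently, if $x_i$ lies in the cell $[\underline{x}_i^j,\overline{x}_i^j]$, then $\mathcal{B}_i(\underline{x}_i^j)\le\mathcal{B}_i(x_i)\le\mathcal{B}_i(\overline{x}_i^j)$.

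\emph{Initial and unsafe conditions.} Let $x\in\mathcal{X}_0=\prod_i\mathcal{X}_{0,i}$. By \eqref{eq:cover}, each $x_i$ lies in a cell with index $j\in\mathcal{I}_i$. Monotonicity and \eqref{eq:ib_initial} give $\mathcal{B}_i(x_i)\le\mathcal{B}_i(\overline{x}_i^j)\le\gamma_i$. Summing over $i$ gives $\mathcal{B}(x)\le\gamma$.

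Let $x\in\mathcal{X}_u=\prod_i\mathcal{X}_{u,i}$. Each $x_i$ lies in a cell with index $j\in\mathcal{U}_i$. Monotonicity and \eqref{eq:ib_unsafe} give $\mathcal{B}_i(x_i)\ge\mathcal{B}_i(\underline{x}_i^j)\ge\eta_i$. Summing gives $\mathcal{B}(x)\ge\sum_i\eta_i>0=\eta$. This is the strict inequality \eqref{eq:bc_unsafe}; choosing $\eta=0$ rather than $\sum_i\eta_i$ is exactly what turns the non-strict local bound into a strict global one.

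\emph{Invariance.} Fix $x\in\mathcal{X}$ and set $w=Mx$, so that $w_i=(Mx)_i\in\mathcal{W}_i$. Choose cells $j\in\mathcal{P}_i$ and $k\in\mathcal{Q}_i$ containing $x_i$ and $w_i$. Monotonicity of $g_i$ gives $g_i(x_i,w_i)\le g_i(\overline{x}_i^j,\overline{w}_i^k)$, and monotonicity of $\mathcal{B}_i$ then gives
\[
\mathcal{B}_i(g_i(x_i,w_i))\le\lambda\mathcal{B}_i(\underline{x}_i^j)+\mathcal{Z}_i^{j,k}\le\lambda\mathcal{B}_i(x_i)+\mathcal{Z}_i^{j,k},
\]
where the first inequality is \eqref{eq:ib_dynamics} and the second uses $\lambda\ge0$.

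The key step is $\mathcal{Z}_i^{j,k}\le [w_i;x_i]^\top X_i[w_i;x_i]$. Expand the right-hand side as the four blocks matching the four terms of $\mathcal{Z}_i^{j,k}$, and each block entrywise as $\sum_{r,s}X_{rs}a_rb_s$. Each product $X_{rs}a_rb_s$ is bilinear in $(a_r,b_s)$ over the rectangle $[\underline{a}_r,\overline{a}_r]\times[\underline{b}_s,\overline{b}_s]$, so its minimum over that rectangle is attained at a vertex. Hence each term of \eqref{eq:bigformula} is a lower bound on the true term, and summing gives the claimed inequality. In other words, the interval quantity $\mathcal{Z}$ underestimates the actual quadratic form, which is the direction we need.

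Summing over $i$ yields
\[
\mathcal{B}(g(x))\le\lambda\mathcal{B}(x)+\sum_i [w_i;x_i]^\top X_i[w_i;x_i].
\]
The remaining sum is exactly $[Mx;x]^\top\Xi[Mx;x]$. This identification uses that $\Xi$ is arranged as all $w$-blocks followed by all $x$-blocks with block-diagonal parts, so the coupling is only within each subsystem, and that $X_i^{21}=(X_i^{12})^\top$. Therefore the sum equals $x^\top\Delta x\le0$ by \eqref{eq:big_lmii}. We conclude $\mathcal{B}(g(x))\le\lambda\mathcal{B}(x)$. So if $\mathcal{B}(x)\le0=\delta$, then $\mathcal{B}(g(x))\le0$ because $\lambda\ge0$, which is \eqref{eq:bc_invariance}. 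Safety then follows from the barrier certificate property.

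The main obstacle is the step $\mathcal{Z}\le$ true quadratic, together with the bookkeeping that reassembles the local quadratic forms into $x^\top\Delta x$. The one-sided vertex-minimum argument must be checked term by term for every sign of $X_{rs}$, and the block ordering of $\Xi$ must be checked against the ordering of $(w,x)$ in the four terms of $\mathcal{Z}_i^{j,k}$.
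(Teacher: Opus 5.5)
Your proposal is correct and follows the paper's proof essentially step for step: the covering property plus monotonicity of $\mathcal{B}_i$ for the initial and unsafe conditions, and for invariance the bound $\mathcal{Z}_i^{j,k}\le [w_i;x_i]^\top X_i[w_i;x_i]$, summed and reassembled into $x^\top\Delta x\le 0$ to get $\mathcal{B}(g(x))\le\lambda\mathcal{B}(x)$. The differences are cosmetic. You take $\gamma=\sum_i\gamma_i$ where the paper takes $\gamma=0$, and you spell out two points the paper leaves implicit: the vertex-minimum justification of the interval lower bound, and the use of $\lambda\ge0$.
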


\begin{proof}
We verify that the function
$\mathcal{B}(x)=\sum_{i=1}^N \mathcal{B}_i(x_i)$ satisfies conditions~\eqref{eq:bc_initial},~\eqref{eq:bc_unsafe}, and~\eqref{eq:bc_invariance}. Regarding condition~\eqref{eq:bc_initial}, let $x\in \mathcal{X}_0=\prod_{i=1}^N \mathcal{X}_{0,i}$. For each $i$, by the
covering property \eqref{eq:cover} there exists $j\in\mathcal{I}_i$ such that
$x_i \le \overline{x}_i^j$. Since $\mathcal{B}_i$ is monotone, we have
$\mathcal{B}_i(x_i)\le \mathcal{B}_i(\overline{x}_i^j)\le \gamma_i$.
Summing over $i$ yields $\mathcal{B}(x)=\sum_{i=1}^N \mathcal{B}_i(x_i)\le \sum_{i=1}^N \gamma_i \le 0$, for every $x\in\mathcal{X}_0$. Regarding condition~\eqref{eq:bc_unsafe}, let $x\in \mathcal{X}_u=\prod_{i=1}^N \mathcal{X}_{u,i}$. For each $i$, by
\eqref{eq:cover} there exists $j\in\mathcal{U}_i$ such that
$\underline{x}_i^j \le x_i$. Monotonicity of $\mathcal{B}_i$ gives $\mathcal{B}_i(x_i)\ge \mathcal{B}_i(\underline{x}_i^j)\ge \eta_i$. 
Summing over $i$ yields
\[
\mathcal{B}(x)=\sum_{i=1}^N \mathcal{B}_i(x_i)\ge \sum_{i=1}^N \eta_i > 0,
\qquad \forall x\in\mathcal{X}_u.
\]
Regarding condition~\eqref{eq:bc_invariance}, fix $x\in\mathcal{X}$ and let $w=Mx$ be the interconnection input. For each
subsystem $i$, select indices $j\in\mathcal{P}_i$ and $k\in\mathcal{Q}_i$ such
that $x_i\in[\underline{x}_i^j,\overline{x}_i^j]$ and $w_i\in[\underline{w}_i^k,\overline{w}_i^k]$. By monotonicity of $g_i$ and $\mathcal{B}_i$,
$\mathcal{B}_i(g_i(x_i,w_i))
\le 
\mathcal{B}_i\!\big(g_i(\overline{x}_i^j,\overline{w}_i^k)\big)$.
Applying \eqref{eq:ib_dynamics} and using monotonicity of $\mathcal{B}_i$ again (so that
$\mathcal{B}_i(\underline{x}_i^j)\le \mathcal{B}_i(x_i)$) yields
$\mathcal{B}_i(g_i(x_i,w_i))
\le 
\lambda\,\mathcal{B}_i(x_i) + \mathcal{Z}_i^{j,k}$.
Summing over $i$ gives
\begin{equation}\label{eq:sum_step_new}
\mathcal{B}(g(x))
=
\sum_{i=1}^N \mathcal{B}_i(g_i(x_i,w_i))
\le
\lambda\,\mathcal{B}(x) + \sum_{i=1}^N \mathcal{Z}_i^{j,k}.
\end{equation}
We now show that if the global condition~\eqref{eq:big_lmii} holds then $\sum_{i=1}^N \mathcal{Z}_i^{j,k} \le 0$. First, we note that by construction~\eqref{eq:bigformula}, we have $[\underline{a},\overline{a}]^\top
X[\underline{b},\overline{b}]
\leq a^\top Xb$, for all $a\in[\underline{a},\overline{a}]$ and
$b\in[\underline{b},\overline{b}]$. This implies that, for every $x_i\in[\underline{x}_i^j,\overline{x}_i^j]$ and $w_i\in[\underline{w}_i^k,\overline{w}_i^k]$,
\begin{align}\label{eq:Qminineq}
[\underline{w}^k_i,\overline{w}^k_i]^{\top}X^{11}_i[\underline{w}^k_i,\overline{w}^k_i] &\le  w_i^{\top}X^{11}_i w_i,\nonumber\\
    [\underline{w}^k_i,\overline{w}^k_i]^{\top}X^{12}_i[\underline{x}^j_i,\overline{x}^j_i] &\le  w_i^{\top}X^{12}_i x_i,\nonumber\\
    [\underline{x}^j_i,\overline{x}^j_i]^{\top}X^{21}_i[\underline{w}^k_i,\overline{w}^k_i] &\le  x_i^{\top}X^{21}_i w_i,\nonumber\\ 
    [\underline{x}^j_i,\overline{x}^j_i]^{\top}X^{22}_i[\underline{x}^j_i,\overline{x}^j_i]& \le x_i^{\top}X^{22}_i x_i. 
 \end{align}
For each $i$, define $z_i := \begin{bmatrix} w_i \\ x_i \end{bmatrix}$. Summing the inequalities in~\eqref{eq:Qminineq}, we obtain 
\begin{align}\label{eq:Zi_bound}
\mathcal{Z}_i^{j,k} &\le w_i^\top X_i^{11} w_i
+ w_i^\top X_i^{12} x_i
+ x_i^\top X_i^{21} w_i
+ x_i^\top X_i^{22} x_i \nonumber\\ & =  z_i^\top X_i z_i .
\end{align}
Summing \eqref{eq:Zi_bound} over $i$ yields
\begin{equation}\label{eq:sumZi_bound}
\sum_{i=1}^N \mathcal{Z}_i^{j,k} \le \sum_{i=1}^N z_i^\top X_i z_i.
\end{equation}
Next, stack the interconnection variables as $w = (w_1^{\top},\ldots,w_N^{\top})^{\top}$, $x = (x_1^{\top},\ldots,x_N^{\top})^{\top}$, and $z=\begin{bmatrix}w\\x\end{bmatrix}$. 
Using the block partition $X_i=\begin{bmatrix}X_i^{11}&X_i^{12}\\X_i^{21}&X_i^{22}\end{bmatrix}$
and the definition of $\Xi$, we have the identity
\begin{equation}\label{eq:sum_quad_to_big}
\sum_{i=1}^N z_i^\top X_i z_i
=
z^\top \Xi\, z.
\end{equation}

Since $w=Mx$, we have $z=\begin{bmatrix}Mx\\x\end{bmatrix}$, and therefore
\[
z^\top \Xi z
=
x^\top 
\begin{bmatrix} M \\ I_n \end{bmatrix}^{\!\top}
\Xi
\begin{bmatrix} M \\ I_n \end{bmatrix}
x
=
x^\top \Delta x.
\]
Because $\Delta \preceq 0$ by \eqref{eq:big_lmii}, it follows that
$z^\top \Xi z \le 0$, and hence, using
\eqref{eq:sumZi_bound}--\eqref{eq:sum_quad_to_big}, we get $\sum_{i=1}^N \mathcal{Z}_i^{j,k} \le 0$. 
Substituting this into \eqref{eq:sum_step_new} yields
$\mathcal{B}(g(x)) \le \lambda\,\mathcal{B}(x)$. Since $\lambda\ge0$, $\mathcal{B}(x)\le0$ implies
$\mathcal{B}(g(x))\le0$. Together with the initial- and
unsafe-set bounds above, Definition~\ref{def:barrier} holds
with the global parameters $\gamma=\delta=\eta=0$.
\end{proof}

\begin{remark} The following remarks are in order.

\noindent(\textit{Compositional approach}):  Theorem~\ref{th:CMB} provides a data-driven compositional framework for safety verification by (i) local construction of interval-barrier functions, and (ii) global verification of the LMI~\eqref{eq:big_lmii} by composing local interval-barrier certificates using the system topology. 
Importantly, the local conditions use only transition data from the corresponding
subsystem, while compositional compatibility is checked separately
through~\eqref{eq:big_lmii}.
    
    \noindent(\textit{Monotonicity of subsystems}): Monotonicity is central to the proof of Theorem~\ref{th:CMB}. It enables the use of interval analysis to ensure that interval-barrier functions satisfying~\eqref{eq:ib_initial}–\eqref{eq:ib_dynamics}—validated on boundary data—extend to valid barrier certificates over the entire state space.  
     
\noindent\textit{(Computational cost of the global condition):} Theorem~\ref{th:CMB} requires the matrices $X_i$ associated with the local interval-barrier functions in Definition~\ref{def:MQB} to satisfy the global LMI condition~\eqref{eq:big_lmii}. For given matrices $X_i$, forming and checking this matrix inequality requires $\mathcal{O}((m+n)^2n+(m+n)n^2+n^3)$ operations. In the next subsection, we show that, for scalar subsystems with
a symmetric interconnection matrix and a common matrix $X_i=X$, the global matrix inequality is equivalent to a set of scalar inequalities. Once the eigenvalues of the interconnection matrix are precomputed, these inequalities can be checked in $\mathcal{O}(N)$ operations for each candidate $X$ and incorporated into the learning framework of Section~IV.
    
\end{remark}

\subsection{LMI-Free Safety Verification}

Despite its compositional structure, safety verification via Theorem~\ref{th:CMB} may still incur significant computational overhead for large-scale networks due to the global LMI condition~\eqref{eq:big_lmii}. To address this limitation, in this section we focus on a particular class of interconnected systems for which the global condition can be reduced to an equivalent characterization in terms of low-dimensional inequalities. 


\begin{theorem}[LMI-free safety verification]
\label{thm:eig_reduction_lmi}
Consider an IMS $\Sigma=(\mathcal{X},\mathcal{X}_0,g,M)$
consisting of $N$ scalar subsystems
$\Sigma_i=(\mathcal{X}_i,\mathcal{X}_{0,i},
\mathcal{W}_i,g_i)$, i.e.,
$n_i=m_i=1$ for all $i\in\{1,\ldots,N\}$, with state
and input partitions satisfying~\eqref{eq:cover}.
Suppose that the interconnection matrix
$M\in\mathbb{S}^{N}$ is symmetric and that there exist
interval-barrier functions
$\mathcal{B}_i:\mathcal{X}_i\to\mathbb{R}$ satisfying
conditions~\eqref{eq:ib_initial}--\eqref{eq:ib_dynamics}
with the same matrix
$X_i=X
\in\mathbb{S}^{2}$
for all $i\in\{1,\ldots,N\}$.
If, for every eigenvalue $\mu_k$ of $M$,
$k\in\{1,\ldots,N\}$,
\begin{equation}
\label{eq:scalar_ineq}
\begin{bmatrix}\mu_k\\1\end{bmatrix}^{\!\top}
X
\begin{bmatrix}\mu_k\\1\end{bmatrix}
\le0,
\qquad
\sum_{i=1}^{N}\gamma_i\le0,
\qquad
\sum_{i=1}^{N}\eta_i>0,
\end{equation}
then
$\mathcal{B}(x)=\sum_{i=1}^{N}\mathcal{B}_i(x_i)$
is a barrier certificate for $\Sigma$, and hence the system
is safe.
\end{theorem}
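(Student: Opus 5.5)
The plan is to reduce the statement to Theorem~\ref{th:CMB}. The conditions on $\gamma_i$ and $\eta_i$ in~\eqref{eq:scalar_ineq} are exactly those in~\eqref{eq:big_lmii}. So it suffices to show that the scalar inequalities in~\eqref{eq:scalar_ineq} imply $\Delta\preceq 0$. Once that holds, Theorem~\ref{th:CMB} yields that $\mathcal{B}(x)=\sum_i\mathcal{B}_i(x_i)$ is a barrier certificate.

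First I will compute $\Xi$ in the scalar, common-$X$ setting. Write $X=\begin{bmatrix}a & b\\ b & c\end{bmatrix}$, so that $X^{11}_i=a$, $X^{12}_i=X^{21}_i=b$ and $X^{22}_i=c$ for every $i$. Then each diagonal block of $\Xi$ is a multiple of the identity:
\[
\Xi=\begin{bmatrix} aI_N & bI_N\\ bI_N & cI_N\end{bmatrix}=X\otimes I_N .
\]
Expanding $\Delta$ and using $M^\top=M$ gives
\[
\Delta = aM^2 + bM + bM + cI_N = aM^2+2bM+cI_N .
\]

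Next, diagonalize $M$. Since $M$ is real symmetric, there is an orthogonal $U$ with $M=U\,\mathrm{diag}(\mu_1,\ldots,\mu_N)\,U^\top$. Because $\Delta$ is a polynomial in $M$, it follows that $\Delta=U\,\mathrm{diag}(p(\mu_1),\ldots,p(\mu_N))\,U^\top$, where
\[
p(\mu)=a\mu^2+2b\mu+c=\begin{bmatrix}\mu\\1\end{bmatrix}^{\!\top}X\begin{bmatrix}\mu\\1\end{bmatrix}.
\]
The eigenvalues of $\Delta$ are therefore exactly $p(\mu_k)$. Hence $\Delta\preceq0$ if and only if $p(\mu_k)\le 0$ for all $k$, which is the first condition in~\eqref{eq:scalar_ineq}. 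This also shows the reduction is an equivalence, not merely sufficient.

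The only step needing care is the block bookkeeping. In $\Xi$ the ordering is $(w,x)$ and not interleaved per subsystem, which is what makes $\Xi=X\otimes I_N$ and lets $M$ be pulled through cleanly. With that checked, the rest is standard spectral calculus for symmetric matrices, and the conclusion follows by invoking Theorem~\ref{th:CMB}.
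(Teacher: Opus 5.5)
Your proposal is correct and follows essentially the same route as the paper. Both arguments identify $\Xi = X\otimes I_N$, orthogonally diagonalize the symmetric $M$, and conclude that the eigenvalues of $\Delta$ are $\begin{bmatrix}\mu_k\\1\end{bmatrix}^{\top}X\begin{bmatrix}\mu_k\\1\end{bmatrix}$, which yields an equivalence with~\eqref{eq:big_lmii} so that Theorem~\ref{th:CMB} applies. The only difference is cosmetic: you expand $\Delta = aM^2+2bM+cI_N$ directly and note that any polynomial in $M$ is diagonalized by the same orthogonal matrix, whereas the paper reaches the same diagonal form through the congruence invariance $(I_2\otimes Q)^\top(X\otimes I_N)(I_2\otimes Q)=X\otimes I_N$; your version is slightly shorter.
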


\begin{proof}
It suffices to show that, under the assumptions of the
theorem, the matrix inequality in~\eqref{eq:big_lmii} is
equivalent to the eigenvalue inequalities
in~\eqref{eq:scalar_ineq}. Since $X_i=X\in\mathbb{S}^2$
for all $i\in\{1,\ldots,N\}$, the structured matrix
$\Xi$ in~\eqref{eq:big_lmii} satisfies
$\Xi=X\otimes I_N$. Therefore, the global condition becomes
\begin{equation}
\label{eq:big_lmi}
\Delta :=
\begin{bmatrix}M\\I_N\end{bmatrix}^{\!\top}\!\!\!
(X\otimes I_N)
\begin{bmatrix}M\\I_N\end{bmatrix}
\preceq0,
\quad
\sum_{i=1}^{N}\gamma_i\le0,
\quad
\sum_{i=1}^{N}\eta_i>0.
\end{equation}
Since $M\in\mathbb{S}^{N}$ is real and symmetric, it admits a spectral decomposition $M=Q\Lambda Q^\top,\; Q^\top Q=I_N$,
where
$\Lambda=\operatorname{diag}(\mu_1,\ldots,\mu_N)$
contains the eigenvalues of $M$.
Define $S:=\begin{bmatrix}M\\I_N\end{bmatrix}
\in\mathbb{R}^{2N\times N}$.
Using $M=Q\Lambda Q^\top$, we can factor
\begin{equation}
\label{eq:S_factor}
S
=
\begin{bmatrix}Q\Lambda Q^\top\\ I_N\end{bmatrix}
=
\underbrace{\begin{bmatrix}Q & 0\\ 0 & Q\end{bmatrix}}_{=: \mathcal Q \in \mathbb{R}^{2N\times 2N}}
\begin{bmatrix}\Lambda Q^\top\\ Q^\top\end{bmatrix}.
\end{equation}
Since $Q$ is orthogonal, so is $\mathcal Q$, i.e., $\mathcal Q^\top\mathcal Q=I_{2N}$.
Next, we claim that $X\otimes I_N$ is invariant under congruence by $\mathcal Q$:
\begin{equation}
\label{eq:invariance}
\mathcal Q^\top (X\otimes I_N) \, \mathcal Q = X\otimes I_N.
\end{equation}
Indeed, noting that $\mathcal Q = I_2\otimes Q$, the mixed-product property of the Kronecker product yields
\begin{align*}
 \mathcal Q^\top (X\otimes I_N\,) \mathcal Q
& = (I_2\otimes Q^\top)(X\otimes I_N)(I_2\otimes Q) \\ 
& = X\otimes(Q^\top I_N Q)
= X\otimes I_N, 
\end{align*}
which proves \eqref{eq:invariance}.
Substituting the factorization \eqref{eq:S_factor} into \eqref{eq:big_lmi} and using \eqref{eq:invariance}, we obtain
\begin{align*}
\Delta
&= S^\top (X\otimes I_N) S
=
\begin{bmatrix}\Lambda Q^\top\\ Q^\top\end{bmatrix}^{\!\top}
\mathcal Q^\top (X\otimes I_N) \mathcal Q
\begin{bmatrix}\Lambda Q^\top\\ Q^\top\end{bmatrix} \\&=
\begin{bmatrix}\Lambda Q^\top\\ Q^\top\end{bmatrix}^{\!\top}
(X\otimes I_N)
\begin{bmatrix}\Lambda Q^\top\\ Q^\top\end{bmatrix}.
\end{align*}
Factoring $Q^\top$ on the right gives $\begin{bmatrix}\Lambda Q^\top\\ Q^\top\end{bmatrix}
=
\begin{bmatrix}\Lambda\\ I_N\end{bmatrix}Q^\top
$. Therefore
\begin{equation}
\label{eq:Delta_congruence}
\Delta
=
Q\left(
\begin{bmatrix}\Lambda\\ I_N\end{bmatrix}^{\!\top}
(X\otimes I_N)
\begin{bmatrix}\Lambda\\ I_N\end{bmatrix}
\right)Q^\top
=: Q\,\widetilde\Delta\,Q^\top.
\end{equation}
Since congruence with an orthogonal matrix preserves semidefiniteness, we have
$\Delta\preceq 0 \iff \widetilde\Delta\preceq 0$.
Because $\Lambda$ is diagonal and $X\otimes I_N$ consists of $2\times2$ scalar blocks multiplied by $I_N$, $\widetilde{\Delta}$ is diagonal, with
\[
\left(\widetilde{\Delta}\right)_{kk}
=
\begin{bmatrix}\mu_k\\1\end{bmatrix}^{\!\top}
X
\begin{bmatrix}\mu_k\\1\end{bmatrix},
\qquad k\in\{1,\ldots,N\}.
\]
Hence,
$\widetilde\Delta \preceq 0
\iff
\begin{bmatrix}\mu_k\\1\end{bmatrix}^{\!\top}X\begin{bmatrix}\mu_k\\1\end{bmatrix}\le 0,\ \forall k\in\{1,\ldots,N \}.$
This proves the equivalence between the matrix inequality in~\eqref{eq:big_lmi} and the eigenvalue inequalities in~\eqref{eq:scalar_ineq}.
\end{proof}
\smallskip

\begin{remark} The following remarks are in order.

\noindent\textit{(LMI-free safety verification).} This reformulation eliminates the need to explicitly enforce the global LMI and instead incorporates the global condition into local training objectives. In Section~\ref{sec:nn}, we show how this condition can be embedded into the loss function of neural networks for learning interval-barrier functions, thereby guaranteeing safety without explicitly enforcing the global LMI condition.

\noindent\textit{(Computational time).}
Theorem~\ref{thm:eig_reduction_lmi} shows that, for scalar subsystems with a symmetric interconnection matrix, the global LMI condition is equivalent to $N$ scalar
inequalities. If the eigen-decomposition of $M$ is precomputed and the common matrix $X$ is fixed, the cost of checking these $N$ scalar inequalities is $\mathcal{O}(N)$.
\end{remark}
\bigskip

Theorems~\ref{th:CMB} and~\ref{thm:eig_reduction_lmi} establish safety guarantees for interconnected monotone systems given suitable local interval-barrier functions and known interconnection matrix, but do not address their construction. In the next section, we develop a neural network–based approach to learn such functions.

\section{Monotone Neural Interval-Barriers}\label{sec:nn}

A common approach for learning data-driven local monotone interval-barrier functions satisfying \eqref{eq:ib_initial}–\eqref{eq:ib_dynamics} while enforcing the global condition \eqref{eq:big_lmii} is to use template-based parameterizations, such as polynomials. However, such methods face a trade-off between expressiveness and tractability and scale poorly with system dimension due to the combinatorial growth in decision variables. Moreover, enforcing monotonicity and compatibility with global conditions such as \eqref{eq:big_lmii} is nontrivial.
In this section, we instead employ monotone neural networks, which provide a flexible and scalable framework for learning interval-barrier certificates while incorporating the required structural constraints.

\subsection{Monotone Neural Networks}
The design of monotone neural networks has been extensively studied, with several architectures demonstrating strong performance across a range of applications~\cite{sill_1998,HD-MV:10}. We adopt a standard feedforward architecture to model such networks. Specifically, consider a neural network $\mathcal{N}:\mathbb{R}^{n_0}\to\mathbb{R}$ with $L$ hidden layers followed by a scalar affine output layer. Let $W^\ell$ and $b^\ell$, for $\ell=0,\ldots,L-1$, denote the weight matrix and bias vector of the hidden-layer transformations, respectively, and let $W^L$ and $b^L$ denote the weight and bias of the scalar output layer. For an input $x\in\mathbb{R}^{n_0}$, the network output is computed recursively as
\begin{align}
y^0 &= x, \label{eq:monNN1_new}\\
y^{\ell+1} &= \sigma\!\left(W^\ell y^\ell+b^\ell\right),
\quad \ell=0,\ldots,L-1, \label{eq:monNN2_new}\\
\mathcal{N}(x) &= W^L y^L+b^L. \label{eq:monNN3_new}
\end{align}
Here, $y^\ell$ denotes the feature vector at layer $\ell$, and the activation function $\sigma$ is applied elementwise. A sufficient condition for network monotonicity is that all weight matrices are entrywise nonnegative, i.e., $W^\ell\geq 0$ for all $\ell=0,\ldots,L$, and that $\sigma$ is nondecreasing~\cite{HD-MV:10}. 

\subsection{Neural Interval-Barriers}

In this section, we use the monotone neural network architecture in~\eqref{eq:monNN1_new}--\eqref{eq:monNN3_new} as a template to learn the local interval-barrier certificates. Building on Theorem~\ref{th:CMB}, we propose the following data-driven formulation for learning monotone neural interval-barrier functions $\mathcal{N}_i: \mathcal{X}_i \to \mathbb{R}$:
\begin{align}
&\mathcal{N}_i(\overline{x}_i^j)  \le \gamma_i
\hspace{-0.2cm}&& \forall j \!\in\! \mathcal{I}_i,\label{eq:n1} \\ 
&\mathcal{N}_i(\underline{x}_i^j) \ge \eta_i  
\hspace{-0.2cm}&& \forall j \!\in\! \mathcal{U}_i,\label{eq:n2} \\ 
&\mathcal{N}_i\big(g_i(\overline{x}_i^j,\overline{w}_i^k)\big)
\le \lambda \mathcal{N}_i(\underline{x}_i^j) \!+\! \mathcal{Z}_i^{j,k},
\hspace{-0.2cm}&& \forall j \!\in\! \mathcal{P}_i,\ \forall k \!\in\! \mathcal{Q}_i,\label{eq:n3} \\ 
&W_i^{\ell} \ge 0,
\hspace{-0.2cm}&& \ell \in \{0,\ldots,L\}.\label{eq:n4}
\end{align}

Condition~\eqref{eq:n4} enforces the monotonicity of the neural network $\mathcal{N}_i$, while conditions~\eqref{eq:n1}--\eqref{eq:n3} ensure that $\mathcal{N}_i$ constitutes an interval-barrier function for subsystem $\Sigma_i$. Learning neural networks that satisfy all conditions~\eqref{eq:n1}--\eqref{eq:n4} is a nontrivial task. To address this, we impose condition~\eqref{eq:n4} directly on the
network weights and use the loss functions introduced below to enforce
conditions~\eqref{eq:n1}--\eqref{eq:n3}.

For each subsystem $\Sigma_i$, we train a monotone neural network $\mathcal{N}_i:\mathcal{X}_i\to\mathbb{R}$ to serve as an interval-barrier certificate using a squared ReLU-based loss function. The total loss is defined as $L_i^{\text{total}} = L_i^1 + L_i^2 + L_i^3$,
where each term $L^1_i$, $L^2_i$, and $L^3_i$ corresponds to enforcing one of the interval-barrier conditions~\eqref{eq:n1}--\eqref{eq:n3} as follows:
\begin{align}
\hspace{-0.15cm} L_i^1 &=\!\!\sum_{j \in \mathcal{I}_i}
\!\!\big[\mathrm{ReLU}\big(  \mathcal{N}_i(\overline{x}_i^j) - \gamma_i  \big)\big]^2, \label{eq:loss1}\\
\hspace{-0.15cm}L_i^2 &=\!\!\sum_{j \in \mathcal{U}_i}
\!\!\big[\mathrm{ReLU}\big( \eta_i - \mathcal{N}_i(\underline{x}_i^j) \big)\big]^2, \label{eq:loss2}\\ 
\hspace{-0.15cm}L_i^3 &=\!\!\!\!\!\!\!\!\!\!\sum_{\substack{j \in \mathcal{P}_i,\;k \in \mathcal{Q}_i}}
\!\!\!\!\!\!\!\!\!\!\big[\mathrm{ReLU}\big( \mathcal{N}_i(g_i(\overline{x}_i^j,\overline{w}_i^k))
- \!\lambda \mathcal{N}_i(\underline{x}_i^j)\!- \mathcal{Z}_i^{j,k} \big)\big]^2. \label{eq:loss3}
\end{align}

Condition~\eqref{eq:n4} is enforced directly by constraining the network
weights during training. Consequently, zero total loss guarantees
conditions~\eqref{eq:n1}--\eqref{eq:n3}, which, together with
\eqref{eq:n4}, establish that each $\mathcal{N}_i$ is a monotone
interval-barrier function.
If, in addition, the associated matrices $X_i$ satisfy the global condition~\eqref{eq:big_lmii}—either by construction or via a posteriori verification—then the requirements of Theorem~\ref{th:CMB} are met. Consequently, the sum of the learned local neural functions constitutes a valid barrier certificate for the overall system, thereby guaranteeing safety of the interconnected system.
This result is formally stated in the following theorem.

\begin{theorem}[Neural Interval-barriers]\label{th:NNmonbarr}
Consider an interconnected monotone system $\Sigma=(\mathcal{X},\mathcal{X}_0,g,M)$ consisting of $N$ subsystems $\Sigma_i=(\mathcal{X}_i,\mathcal{X}_{0,i},\mathcal{W}_i,g_i)$ whose state and input partitions satisfying~\eqref{eq:cover}. 
Suppose that, for each subsystem $\Sigma_i$, there exists a monotone neural network candidate $\mathcal{N}_i$ trained using the loss functions defined in~\eqref{eq:loss1}--\eqref{eq:loss3}, with an associated symmetric matrix
$X_i\in\mathbb{S}^{m_i+n_i}$
satisfying the global condition~\eqref{eq:big_lmii}. 
If $L_i^{\mathrm{total}}=0$ for all $i\in\{1,\dots,N\}$, then each $\mathcal{N}_i$ is a valid monotone interval-barrier function for the subsystem $\Sigma_i$. Consequently, the function
$\mathcal{N}(x)=\sum_{i=1}^{N}\mathcal{N}_i(x_i)$
is a valid barrier certificate for the overall system $\Sigma$, and hence $\Sigma$ is safe.
\end{theorem}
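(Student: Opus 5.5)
The plan is to reduce the statement to Theorem~\ref{th:CMB}. First we convert zero loss into the pointwise conditions~\eqref{eq:n1}--\eqref{eq:n3}. Then we check that each $\mathcal{N}_i$ is a monotone interval-barrier function in the sense of Definition~\ref{def:MQB}. Finally we invoke Theorem~\ref{th:CMB} with the global condition~\eqref{eq:big_lmii}, which is assumed to hold for the matrices $X_i$.

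\textbf{Step 1 (zero loss gives the local inequalities).} Each of $L_i^1,L_i^2,L_i^3$ in~\eqref{eq:loss1}--\eqref{eq:loss3} is a finite sum of terms of the form $[\mathrm{ReLU}(\cdot)]^2\ge 0$. Hence $L_i^{\mathrm{total}}=0$ forces every summand to vanish. Since $[\mathrm{ReLU}(s)]^2=0$ if and only if $s\le 0$, each argument is nonpositive. This gives, term by term:
\begin{itemize}
\item $\mathcal{N}_i(\overline{x}_i^j)\le\gamma_i$ for all $j\in\mathcal{I}_i$;
\item $\mathcal{N}_i(\underline{x}_i^j)\ge\eta_i$ for all $j\in\mathcal{U}_i$;
\item $\mathcal{N}_i(g_i(\overline{x}_i^j,\overline{w}_i^k))\le\lambda\mathcal{N}_i(\underline{x}_i^j)+\mathcal{Z}_i^{j,k}$ for all $j\in\mathcal{P}_i$, $k\in\mathcal{Q}_i$.
\end{itemize}
These are exactly~\eqref{eq:n1}--\eqref{eq:n3}, i.e.,~\eqref{eq:ib_initial}--\eqref{eq:ib_dynamics}, with $\mathcal{Z}_i^{j,k}$ built from the given $X_i$.

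\textbf{Step 2 (monotonicity).} Condition~\eqref{eq:n4} is enforced on the weights, so $W_i^\ell\ge0$ entrywise. The activation $\sigma$ is nondecreasing, so $\mathcal{N}_i$ is monotone by the sufficient condition cited for~\eqref{eq:monNN1_new}--\eqref{eq:monNN3_new}. To make this precise, argue by induction on layers: if $y\le y'$ then $W^\ell y+b^\ell\le W^\ell y'+b^\ell$ because $W^\ell\ge0$, and applying $\sigma$ elementwise preserves the order. The affine output layer with $W^L\ge0$ preserves it as well. Combining Steps~1 and~2, each $\mathcal{N}_i$ is a monotone interval-barrier function per Definition~\ref{def:MQB}, with the scalars $\gamma_i,\eta_i$, the matrix $X_i$, and the common $\lambda\ge0$.

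\textbf{Step 3 (composition).} All hypotheses of Theorem~\ref{th:CMB} now hold:
\begin{itemize}
\item the partitions satisfy~\eqref{eq:cover};
\item each subsystem admits an interval-barrier function $\mathcal{B}_i=\mathcal{N}_i$ with matrix $X_i$;
\item~\eqref{eq:big_lmii} holds, namely $\Delta\preceq0$, $\sum_i\gamma_i\le0$, and $\sum_i\eta_i>0$.
\end{itemize}
Theorem~\ref{th:CMB} then yields that $\mathcal{N}(x)=\sum_i\mathcal{N}_i(x_i)$ is a barrier certificate with $\gamma=\delta=\eta=0$. Safety of $\Sigma$ follows from Definition~\ref{def:barrier}.

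The only delicate point is the monotonicity claim in Step~2. It requires that $\sigma$ be nondecreasing and that the output weight $W^L$ also be nonnegative, not just the hidden-layer weights; \eqref{eq:n4} covers $\ell=L$, so this holds. A second point is making sure the $\gamma_i,\eta_i,\lambda$ used in the loss are the same ones appearing in~\eqref{eq:big_lmii}, which is implicit in the statement. Beyond that, the argument is a direct chaining of the earlier results.
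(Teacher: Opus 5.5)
Your proposal is correct and follows the paper's proof essentially step for step. Zero total loss forces every squared-ReLU summand to vanish, which yields \eqref{eq:n1}--\eqref{eq:n3}, and the conclusion then follows by invoking Theorem~\ref{th:CMB} under the assumed global condition~\eqref{eq:big_lmii}. Your layer-by-layer induction deriving monotonicity from~\eqref{eq:n4} is a small added detail; the paper does not prove it, and instead relies on the cited sufficient condition and on the hypothesis that each $\mathcal{N}_i$ is a monotone candidate.
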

\begin{proof}
Assume that $L_i^{\mathrm{total}}=L_i^1+L_i^2+L_i^3=0$ for all 
$i\in\{1,\dots,N\}$. We show that $\mathcal{N}_i$ is a local interval-barrier function for subsystem $\Sigma_i$. Since each loss term is a sum of squared ReLU terms, each term is nonnegative. Hence,
$L_i^{\mathrm{total}}=0
\Longrightarrow
L_i^1=0,\;L_i^2=0,\;L_i^3=0,
\forall i\in\{1,\dots,N\}$.
From $L_i^1=0$ and \eqref{eq:loss1}, every summand must vanish, which implies
$\mathcal{N}_i(\overline{x}_i^j)\le \gamma_i,\; \forall j\in\mathcal{I}_i$.
Thus $\mathcal{N}_i$ satisfies condition~\eqref{eq:ib_initial}. Similarly, from $L_i^2=0$ and
\eqref{eq:loss2}, we obtain
$\mathcal{N}_i(\underline{x}_i^j)\ge \eta_i,\; \forall j\in\mathcal{U}_i$,
and hence $\mathcal{N}_i$ satisfies condition \eqref{eq:ib_unsafe}. Likewise, $L_i^3=0$ together with
\eqref{eq:loss3} yields
$\mathcal{N}_i\big(g_i(\overline{x}_i^j,\overline{w}_i^k)\big)
\le \lambda \mathcal{N}_i(\underline{x}_i^j)+\mathcal{Z}_i^{j,k},
\; \forall j\in\mathcal{P}_i,\ \forall k\in\mathcal{Q}_i$. This means that $\mathcal{N}_i$ satisfies condition~\eqref{eq:ib_dynamics}. Therefore, each $\mathcal{N}_i$ is a valid monotone interval-barrier
function for the subsystem $\Sigma_i$.
If, in addition, the associated matrices $X_i$ satisfy the global
condition~\eqref{eq:big_lmii}, then all assumptions of
Theorem~\ref{th:CMB} are satisfied. Hence,
$\mathcal{N}(x)=\sum_{i=1}^N \mathcal{N}_i(x_i)$
is a barrier certificate for the overall system $\Sigma$, and therefore
$\Sigma$ is safe.
\end{proof}
\smallskip

\begin{remark}
The following remarks are in order.

\noindent\textit{(Loss Function Variants)}: The total loss $L_i^{\text{total}} = L_i^1 + L_i^2 + L_i^3$ can be modified to improve training. For instance, smooth alternatives such as mean squared error (MSE) losses may be employed to enhance gradient behavior, while the proposed loss can be retained as a termination criterion. In our experiments, however, the proposed loss is used for both training and termination, yielding stable optimization.

\noindent\textit{(LMI-Free Formulation)}: The loss function $L^{\mathrm{total}}_i$ applies to the general setting, where the matrices $X_i$ are either specified a priori or verified a posteriori via the global condition~\eqref{eq:big_lmii}. In contrast, under the LMI-free characterization, the global LMI is completely eliminated from the verification process. Instead, the eigenvalue inequalities in~\eqref{eq:scalar_ineq} can be incorporated into the training objective through the shared loss term
$L^{\mathrm{eig}}
=
\sum_{k=1}^{N}
\left[
\mathrm{ReLU}\left(
\mu_k^2X^{11}+2\mu_kX^{12}+X^{22}
\right)
\right]^2$.
This enables the common matrix $X\in\mathbb{S}^2$ to be learned during training without explicitly constructing or checking the global LMI. If $L_i^{\mathrm{total}}=0$ for all $i$, $L^{\mathrm{eig}}=0$, the weight constraints~\eqref{eq:n4} hold, and the scalar sum conditions in~\eqref{eq:scalar_ineq} are satisfied, then the learned neural networks are interval-barrier functions whose sum constitutes a valid barrier certificate for the overall interconnected system.
\end{remark}

Algorithm~\ref{alg:MonNNBarInt} summarizes the general LMI-based
procedure for prescribed matrices $X_i$. In the LMI-free formulation,
the common matrix $X$ can instead be learned by augmenting the
training objective with $L^{\mathrm{eig}}$.

\begin{algorithm}[!t]
\caption{Compositional Neural Safety Verification}
\label{alg:MonNNBarInt}
\textbf{Inputs:} Simulator access to $g_i$, $i\in\{1,\dots,N\}$;
the interconnection matrix $M$; sets $\mathcal{X}_{0,i}$,
$\mathcal{X}_{u,i}$, $\mathcal{X}_i$, and $\mathcal{W}_i$;
index sets $\mathcal{P}_i$, $\mathcal{Q}_i$, $\mathcal{I}_i$,
and $\mathcal{U}_i$; neural-network architectures $\mathcal{N}_i$;
a common scalar $\lambda\geq0$; scalars $\gamma_i,\eta_i$;
fixed matrices $X_i\in\mathbb{S}^{m_i+n_i}$ such that the global
conditions in~\eqref{eq:big_lmii} hold; and a maximum number of
iterations $e_{\max}$.\\
\textbf{Outputs:} Neural interval-barrier functions
$\{\mathcal{N}_i\}_{i=1}^{N}$, or \texttt{FAIL}.
\begin{algorithmic}[1]
\STATE Query $g_i(\overline{x}_i^j,\overline{w}_i^k)$ for all
$i$, $j\in\mathcal{P}_i$, and $k\in\mathcal{Q}_i$
\STATE Initialize the parameters of $\{\mathcal{N}_i\}_{i=1}^{N}$
\STATE Set $e\gets1$
\WHILE{$e\leq e_{\max}$}
    \STATE Update each $\mathcal{N}_i$ by minimizing
    $L_i^{\mathrm{total}}$, subject to
    $W_i^\ell\geq0$ for all $\ell$
    \STATE Compute $L_i^{\mathrm{total}}$ for all $i$
    \IF{$L_i^{\mathrm{total}}=0$ for all $i$}
        \STATE \textbf{return} $\{\mathcal{N}_i\}_{i=1}^{N}$
    \ENDIF
    \STATE $e\gets e+1$
\ENDWHILE
\STATE \textbf{return} \texttt{FAIL}
\end{algorithmic}
\end{algorithm}

\section{Numerical Experiments}
In this section, we evaluate the efficiency of the proposed framework. The learning is performed in a black-box manner using only sampled data. In all numerical experiments, each interval-barrier is learned using a monotone neural network with a single hidden layer of 20 neurons and $\tanh$ activation. Experiments are conducted on a MacBook Pro (Apple M3 Max, 48 GB RAM).

\subsection{Two-Dimensional Interconnected Systems}
\begin{figure}[!b]
 \centering
  \includegraphics[width =0.49\linewidth]{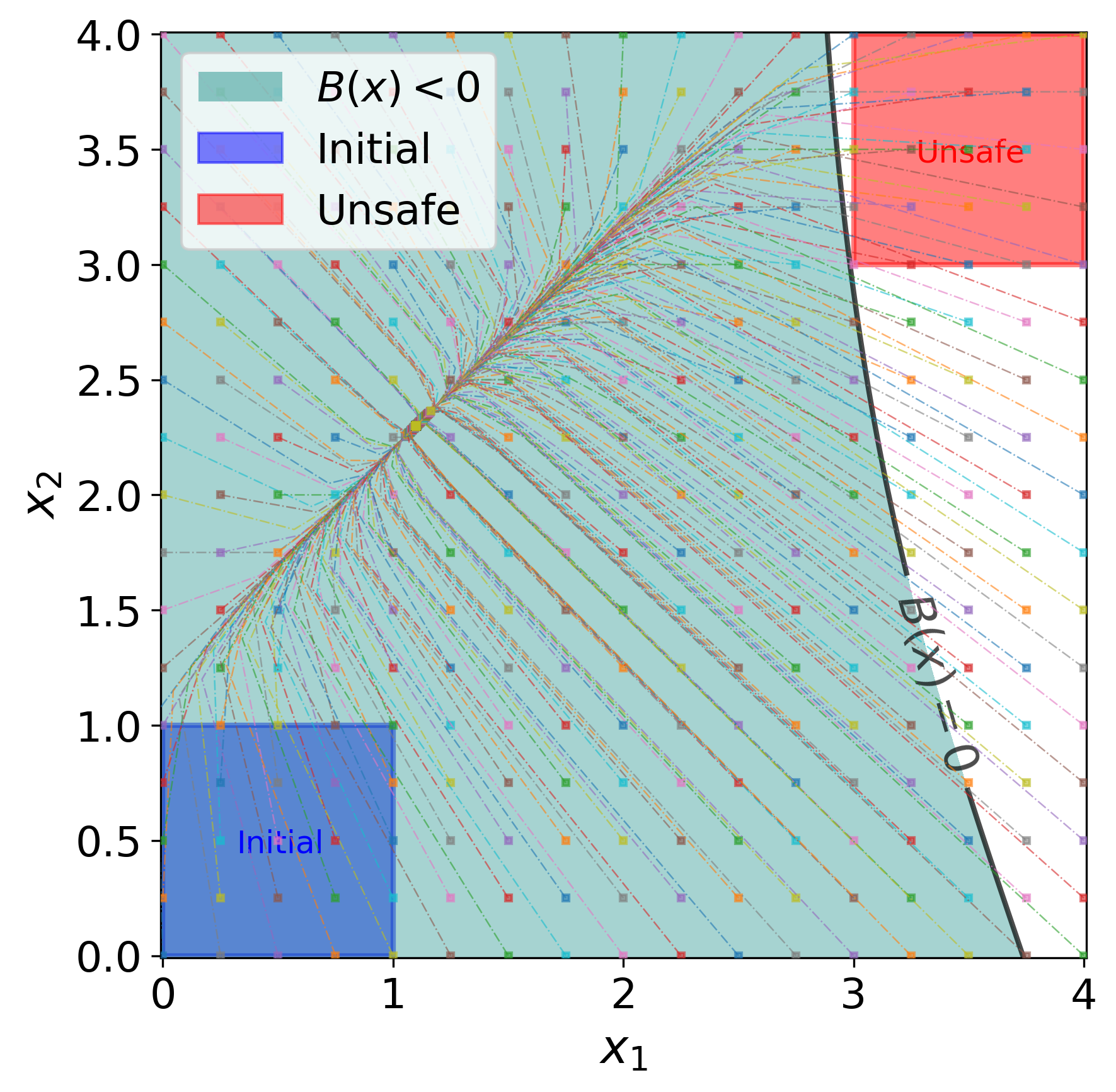}
  \includegraphics[width =0.49\linewidth]{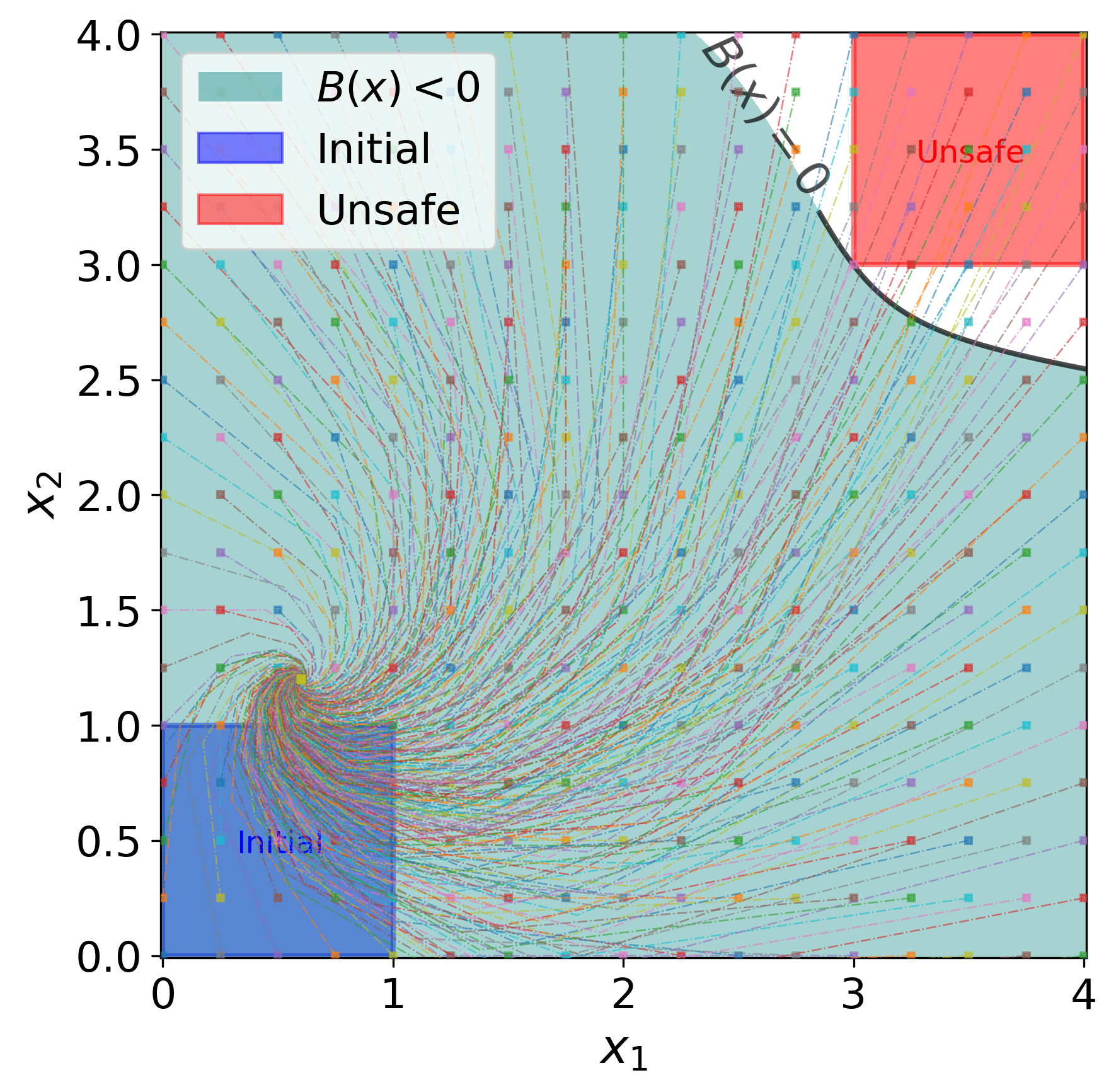}
\caption{Zero sublevel sets of the learned barrier functions. Left: monotone interconnection (Case 1, LMI-free). Right: non-monotone interconnection (Case 2, LMI-based).}
\label{fig:twosys}
 \end{figure}
We consider two representative case studies of interconnected scalar subsystems. In both settings, the state space is $\mathcal{X} = [0,4]^2$, the initial set is $\mathcal{X}_0 = [0,1]^2$, and the unsafe set is $\mathcal{X}_u = [3,4]^2$. The parameter $\lambda$ is set to $10^{-3}$, and the time step is $h = 0.3$. For Case 1 (monotone interconnection), we consider the discrete-time system
\begin{align*}
x_1^{+} &= (1-2h)x_1 + h x_2 - 0.1h,\\
x_2^{+} &= h x_1 + (1-2h)x_2 + 3.5h.
\end{align*} 
This overall system is monotone. By defining $w_1 = x_2$ and $w_2 = x_1$, the system can be represented as an interconnected monotone system with the interconnection matrix 
$M = \left[\begin{smallmatrix} 0 & 1 \\ 1 & 0 \end{smallmatrix}\right]$.
We employ the LMI-free formulation to learn a common matrix $X=X_1=X_2\in\mathbb{S}^2$. The local state and internal input spaces are coarsely partitioned using boundary samples, e.g., $x_i \in [0,4]$, $w_i \in [0,4]$. The learned neural interval-barriers successfully certify safety. The zero sublevel set of the resulting barrier is shown in Fig.~\ref{fig:twosys} (left). For Case 2 (non-monotone interconnection), we consider the discrete-time system
\begin{align*}
x_1^{+} &= (1-2h)x_1 + h x_2,\\
x_2^{+} &= -h x_1 + (1-2h)x_2 + 3.0h.
\end{align*}
The overall system is not monotone. However, by defining $w_1 = x_2$ and $w_2 = -x_1$, the system can be represented as an interconnection of monotone subsystems with
the interconnection matrix  $M = \left[\begin{smallmatrix} 0 & 1 \\ -1 & 0 \end{smallmatrix}\right]$.
Since $M$ is not symmetric, we employ the general LMI condition to compute $X_1$ and $X_2$. The same neural architecture and coarse partitions as in Case 1 are used, with adjusted input bounds reflecting the sign change in $w_2$. The learned interval-barriers again certify safety. The zero sublevel set is shown in Fig.~\ref{fig:twosys} (right). These results highlight the ability of the proposed framework to certify safety using limited data under both monotone and non-monotone interconnections.

\subsection{Gene Regulatory Network}

We consider a gene regulatory network of $N$ subsystems with a cyclic negative feedback interconnection~\cite{GAE-HLS-EDS:06}:
\begin{align*}
x_1^{+} &= x_1 + h\big(-\alpha x_1 + \ell(x_N)\big), \\
x_i^{+} &= x_i + h\big(-\alpha x_i + x_{i-1}\big), \quad i \in\{ 2,\dots,N\},
\end{align*}
where $\alpha > 0$ is the degradation rate and the nonlinear feedback function is defined as
$\ell(x) = a/(1 + k x^q)$,
with parameters $a>0$, $k>0$, and $q \ge 1$. In this work, we consider $\alpha = 10$, $a = 1.0$, $k = 1.0$, and $q = 2$.
Each subsystem is represented in the form $x_i^{+} = g_i(x_i, w_i)$, where the internal inputs are defined as
$w_1 = -x_N,\; w_i = hx_{i-1}, \; i \in\{ 2,\dots,N\}$.
With this representation and step size $h$ chosen such that $h \le 1/\alpha$, each local subsystem is monotone with respect to its state and internal input, while the non-monotonicity of the overall network is captured through the interconnection structure.

\begin{table}[!t]
\centering
\setlength{\tabcolsep}{6.5pt}
\renewcommand{\arraystretch}{1.3}
\begin{tabular}{|c|c|c|c|c|c|}
\hline
Number of subsystems $N$ & $5$ & $10$ & $20$ & $40$  & $50$ \\
\hline
Runtime (min) & $0.4$ & $1.1$ & $1.9$ &  $11.8$ &  $15$  \\
\hline
Number of samples & $310$ & $620$ & $1240$ & $2480$ & $3100$ \\
\hline
\end{tabular}
\caption{Gene regulatory network safety verification.}\label{tab:gene}
\end{table}

The state space is $\mathcal{X} = [0,1]^N$, with initial and unsafe sets $\mathcal{X}_0 = [0,0.2]^N$ and $\mathcal{X}_u = [0.8,1]^N$. The parameter $\lambda$ is set to $10^{-3}$, and a uniform partition with 20 intervals generates boundary samples. We learn one monotone neural barrier for the first subsystem and a shared barrier for the remaining subsystems, together with local matrices $\{X_i\}_{i=1}^N$. Training jointly optimizes these variables under the compositional conditions, with the global constraint enforced via the maximum eigenvalue of $\Delta$. Training terminates once the loss reaches zero, yielding a valid compositional interval-barrier certificate. Table~\ref{tab:gene} summarizes the results, demonstrating scalability and efficient sample usage.

\section{Conclusion}
In this paper, we proposed a compositional framework for the formal safety verification of interconnected monotone subsystems using only finite data, without requiring explicit models of the local dynamics. At the subsystem level, we exploited monotonicity to reduce the search for local barrier certificates to localized boundary conditions, which can be efficiently learned using monotone neural barrier certificates. At the global level, we leveraged system topology to derive an LMI condition that certifies the safety of the overall interconnected system. Experimental results demonstrate the effectiveness and scalability of the proposed framework.


\bibliographystyle{IEEEtran}
\bibliography{ref,SJ}

\end{document}